\documentclass[12pt]{article}

\usepackage{amsmath}
\usepackage{amssymb}
\usepackage{amsfonts}
\usepackage{latexsym}
\usepackage{color}

\catcode `\@=11 \@addtoreset{equation}{section}

\catcode `\@=12



  \voffset1cm

\newcommand{\be}{\begin{equation}}
\newcommand{\en}{\end{equation}}
\newcommand{\bea}{\begin{eqnarray}}
\newcommand{\ena}{\end{eqnarray}}
\newcommand{\beano}{\begin{eqnarray*}}
\newcommand{\enano}{\end{eqnarray*}}
\newcommand{\bee}{\begin{enumerate}}
\newcommand{\ene}{\end{enumerate}}

\newcommand{\N}{\mathfrak N}

\newcommand{\mc}{\mathcal}

\newcommand{\D}{{\mc D}}

\newcommand{\E}{{\cal E}}
\newcommand{\F}{{\cal F}}

\newcommand{\Lc}{{\cal L}}

\newcommand{\1}{1 \!\! 1}

\newcommand{\Hil}{\mc H}

\newtheorem{thm}{Theorem}

\newtheorem{prop}[thm]{Proposition}

\newenvironment{proof}{\noindent {\bf Proof --}}{\hfill$\square$ \vspace{3mm}\endtrivlist}

\catcode `\@=11 \@addtoreset{equation}{section}
\catcode `\@=12

\textwidth17cm \textheight21cm

\hoffset-1.5cm \voffset-1cm

\begin{document}

\thispagestyle{empty}

\vspace*{2cm}

\begin{center}
{\Large \bf Construction of pseudo-bosons systems}   \vspace{2cm}\\

{\large F. Bagarello}\\
  Dipartimento di Metodi e Modelli Matematici,
Facolt\`a di Ingegneria,\\ Universit\`a di Palermo, I-90128  Palermo, Italy\\
e-mail: bagarell@unipa.it

\end{center}

\vspace*{2cm}

\begin{abstract}
\noindent In a recent paper we have considered an  explicit model of a PT-symmetric system based on a modification of the canonical commutation relation. We have introduced the so-called {\em pseudo-bosons}, and the role of Riesz bases in this context has been analyzed in detail. In this paper we consider a general construction of pseudo-bosons based on an explicit { coordinate-representation}, extending what is usually done in ordinary supersymmetric quantum mechanics. We also discuss an example arising from a linear modification of standard creation and annihilation operators, and we analyze its connection with coherent states.

\end{abstract}

\vspace{2cm}


\vfill


\newpage

\section{Introduction}

In a recent paper, \cite{tri}, Trifonov suggested a possible explicit model of a PT-symmetric system based on a modification of the canonical commutation relation (CCR). The physical relevance of this model, and of the {\em particles} the model describes, the so-called {\em pseudo-bosons}, follows from the fact that it provides a nice example of what is called {\em pseudo-hermitian quantum mechanics} (PHQM) in the sense discussed in \cite{mosta,mosta2,bender} and in references therein. In PHQM self-adjoint hamiltonians are replaced by operators satisfying certain rules with respect to the parity and the time reversal operators and, as a consequence, possess eigenvalues which are real or which appear in conjugate pairs. However, \cite{tri} neglects many mathematical details of the model, making most of its results  purely formal. In \cite{bag1} we have considered  the same abstract model, but adopting a  mathematically rigorous point of view.  In particular, this  analysis has produced a somehow unexpected result, showing that Riesz bases, \cite{you,chri}, play a crucial role in this context. in this paper we continue our analysis and we construct other examples of pseudo-bosons working in an explicit coordinate representation and taking $\Lc^2(\Bbb{R})$ as our Hilbert space. We will see that, under special conditions, the procedure considered here collapses into that discussed for ordinary supersymmetric (Susy) quantum mechanics.

 The paper is organized as follows: in the next section we introduce the problem and, to keep the paper self-contained, we summarize our previous results.  In Section III we show how a new class of examples can be constructed. Section IV contains some consequences of our construction, while  we consider few concrete examples in Section V. Section VI, finally, contains a rather different example arising from a linear modification of the CCR, which is interesting for us since produces some results on coherent states.

\section{Description of the system}

Let $\Hil$ be a given Hilbert space with scalar product $\left<.,.\right>$ and related norm $\|.\|$. In \cite{tri,bag1} two operators $a$ and $b$ acting on $\Hil$ and satisfying the following commutation rule
\be
[a,b]=\1
\label{21}
\en
were introduced. Of course, this reduces to the CCR if $b=a^\dagger$. It is well known that $a$ and $b$ cannot be both bounded operators, so that they cannot be defined in all of $\Hil$. In the rest of the paper, given a certain operator $X$, we will call $D(X)$ its domain. In \cite{bag1} we have considered the following

\vspace{2mm}

{\bf Assumption 1.--} there exists a non-zero $\varphi_0\in\Hil$ such that $a\varphi_0=0$ and $\varphi_0\in D^\infty(b):=\cap_{k\geq0}D(b^k)$.

\vspace{2mm}

Under this assumption we can introduce the  vectors
\be
\varphi_n=\frac{1}{\sqrt{n!}}\,b^n\,\varphi_0, \quad n\geq 0, \quad \mbox{or}\quad \varphi_n=\frac{1}{\sqrt{n}}\,b\,\varphi_{n-1}, \quad n\geq 1,
\label{22}\en
which clearly belong to $\Hil$ for all $n\geq 0$. Let us now define the  (unbounded) operator $N:=ba$. Notice that $N\neq N^\dagger$. It is possible to check that $\varphi_n$ belongs to $D(N)$ for all $n\geq0$, and that
\be
N\varphi_n=n\varphi_n, \quad n\geq0.
\label{23}\en
Let us now put $\N:=N^\dagger=a^\dagger b^\dagger$. Because of (\ref{21})  we find $[N,b]=b$, $[N,a]=-a$, $[\N,a^\dagger]=a^\dagger$, $[\N,b^\dagger]=-b^\dagger$ and, moreover
\be
[b^\dagger,a^\dagger]=\1,
\label{24}\en
which again coincides with the CCR if $b^\dagger=a$. Let us now consider the following
\vspace{2mm}

{\bf Assumption 2.--} there exists a non-zero $\Psi_0\in\Hil$ such that $b^\dagger\Psi_0=0$ and $\Psi_0\in D^\infty(a^\dagger):=\cap_{k\geq0}D((a^\dagger)^k)$.

\vspace{2mm}\noindent

Hence  we can define
\be
\Psi_n=\frac{1}{\sqrt{n!}}(a^\dagger)^n\Psi_0, \quad n\geq 0, \quad \mbox{or}\quad \Psi_n=\frac{1}{\sqrt{n}}(a^\dagger)\Psi_{n-1}, \quad n\geq 1,
\label{25}\en
which  belong to $\Hil$ for all $n\geq 0$. They also belong to the domain of $\N$ and
\be
\N\Psi_n=n\Psi_n, \quad n\geq0.
\label{26}
\en

\vspace{2mm}

In the above assumptions we have $\left<\Psi_n,\varphi_m\right>=\delta_{n,m}\left<\Psi_0,\varphi_0\right>$ for all $n, m\geq0$, which, if  $\left<\Psi_0,\varphi_0\right>=1$, becomes
\be
\left<\Psi_n,\varphi_m\right>=\delta_{n,m}, \quad \forall n,m\geq0.
\label{27}\en
This means that the $\Psi_n$'s and the $\varphi_n$'s are biorthogonal. Moreover we have shown in \cite{bag1} that $\varphi_n\in D(a)$ and $\Psi_n\in D(b^\dagger)$ for all $n\geq0$, and that
$
a\varphi_n=\left\{
    \begin{array}{ll}
        0,\hspace{1.9cm}\mbox{ if } n=0,  \\
        \sqrt{n}\,\varphi_{n-1}, \hspace{.6cm} \mbox{ if } n>0,\\
       \end{array}
        \right.
        $
and
$
b^\dagger\Psi_n=\left\{
    \begin{array}{ll}
        0,\hspace{1.9cm}\mbox{ if } n=0,  \\
        \sqrt{n}\,\Psi_{n-1}, \hspace{.6cm} \mbox{ if } n>0.\\
       \end{array}
        \right.
$

Calling $\D_\varphi$ and $\D_\Psi$ respectively the linear span of  $\F_\varphi=\{\varphi_n,\,n\geq 0\}$ and $\F_\Psi=\{\Psi_n,\,n\geq 0\}$, and $\Hil_\varphi$ and $\Hil_\Psi$ their closures, we can also prove that
\be
f=\sum_{n=0}^\infty \left<\Psi_n,f\right>\,\varphi_n, \quad \forall f\in\Hil_\varphi,\qquad  h=\sum_{n=0}^\infty \left<\varphi_n,f\right>\,\Psi_n, \quad \forall h\in\Hil_\Psi.
\label{210}\en
What is not in general ensured is that  $\Hil_\varphi=\Hil_\Psi=\Hil$. With our assumptions we can only state that $\Hil_\varphi\subseteq\Hil$ and $\Hil_\Psi\subseteq\Hil$. However, in all the examples considered in \cite{bag1}, these three Hilbert spaces really coincide and for this reason it is natural to consider the following

\vspace{2mm}

{\bf Assumption 3.--} The above Hilbert spaces  coincide: $\Hil_\varphi=\Hil_\Psi=\Hil$.

\vspace{2mm}

From (\ref{210}) we deduce now that both $\F_\varphi$ and $\F_\Psi$ are bases in $\Hil$. The resolution of the identity looks now
\be
\sum_{n=0}^\infty
|\varphi_n><\Psi_n|=\sum_{n=0}^\infty
|\Psi_n><\varphi_n|=\1,
\label{211}\en
where $\1$ is the  identity of $\Hil$ and where the useful Dirac bra-ket notation has been adopted. In \cite{tri} the following operators were introduced
\be
\eta_\varphi=\sum_{n=0}^\infty
|\varphi_n><\varphi_n|,\qquad \eta_\Psi=\sum_{n=0}^\infty
|\Psi_n><\Psi_n|.
\label{212}\en
However, neither $\eta_\varphi$  nor $\eta_\Psi$ need to be well defined: for instance these series could  be not convergent, or even if they converge, they could converge to some unbounded operator, so we have to be careful about domains. This is, in fact, what we have done in \cite{bag1}: $\eta_\varphi$ acts on a vector $f$ in its domain $D(\eta_\varphi)$ as $\eta_\varphi f=\sum_{n=0}^\infty\left<\varphi_n,f\right>\varphi_n$ and $\eta_\Psi$  acts on a vector $h$ in its domain $D(\eta_\Psi)$ as $\eta_\Psi h=\sum_{n=0}^\infty\left<\Psi_n,h\right>\Psi_n$. Under Assumption 3, both these operators are densely defined in $\Hil$. In particular, we find that
\be
\eta_\varphi\Psi_n=\varphi_n,\qquad
\eta_\Psi\varphi_n=\Psi_n,
\label{213}\en
for all $n\geq0$, which also implies that $\Psi_n=(\eta_\Psi\eta_\varphi)\Psi_n$ and $\varphi_n=(\eta_\varphi\eta_\Psi)\varphi_n$, for all $n\geq0$. Hence
\be
\eta_\Psi\eta_\varphi=\eta_\varphi\eta_\Psi=\1 \quad \Rightarrow \quad \eta_\Psi=\eta_\varphi^{-1}.
\label{214}\en
In other words, both $\eta_\Psi$ and $\eta_\varphi$ are invertible and one is the inverse of the other. Furthermore, they are both positive defined and symmetric. However they are not in general bounded. Indeed we know,  \cite{you}, that two biorthogonal bases are related by a bounded operator, with bounded inverse, if and only if they are Riesz bases. This is why we consider

\vspace{2mm}

{\bf Assumption 4.--} $\F_\varphi$ and $\F_\Psi$ are Bessel sequences. In other words, there exist two positive constants $A_\varphi,A_\Psi>0$ such that, for all $f\in\Hil$,
\be
\sum_{n=0}^\infty\,|\left<\varphi_n,f\right>|^2\leq A_\varphi\,\|f\|^2,\qquad \sum_{n=0}^\infty\,|\left<\Psi_n,f\right>|^2\leq A_\Psi\,\|f\|^2.
\label{215}\en

\vspace{2mm}
\noindent
This assumption is equivalent to require that $\F_\varphi$ and $\F_\Psi$ are both Riesz bases, and implies  that $\eta_\varphi$ and $\eta_\Psi$ are bounded operators:
$\|\eta_\varphi\|\leq A_\varphi$, $\|\eta_\Psi\|\leq A_\Psi$. Moreover
$
\frac{1}{A_\Psi}\,\1\leq \eta_\varphi \leq A_\varphi\,\1,$ and  $\frac{1}{A_\varphi}\,\1\leq \eta_\Psi \leq A_\Psi\,\1.
$
Hence the domains of $\eta_\varphi$ and $\eta_\Psi$ can be taken to be all of $\Hil$.

\vspace{2mm}

In \cite{tri,bag1} several examples of operators $a$ and $b$ satisfying  (\ref{21}) have been considered. They all arise from the standard annihilation and creation operators
$c:=\frac{1}{\sqrt{2}}\left(\frac{d}{dx}+x\right)$ and $c^\dagger=\frac{1}{\sqrt{2}}\left(-\frac{d}{dx}+x\right)$ on $\Hil=\Lc^2(\Bbb{R},dx)$, $[c,c^\dagger]=\1$, as follows:
\begin{itemize}
\item choice 1, the trivial one: $a=c$ and $b=c^\dagger$.

\item choice 2, a one-parameter deformation: $a_s=c+sc^\dagger$ and $b_s=sc+(1+s^2)c^\dagger$ for all real $-1<s<1$.

\item choice 3, a two-parameters deformation: $a_{\alpha,\mu}:=\alpha c+\frac{\alpha}{\mu} c^\dagger$, $b_{\alpha,\mu}:=\mu\frac{\alpha^2-1}{\alpha} c+\alpha c^\dagger$ for $\alpha>1$ and $1<\mu<1+\frac{1}{\alpha^2-1}$.

\end{itemize}

With these choices the first three assumptions can be easily checked, while the fourth one is clear for the trivial choice but was not discussed for choices 2 and 3. Notice that all these choices are linear in both $\frac{d}{dx}$ and $x$. 

Quite interestingly,  any Riesz basis produces a pair of operators $a$ and $b$ satisfying $[a,b]=\1$ and all the above assumptions, so that more examples of pseudo-bosons could be constructed from {\bf any} Riesz basis, \cite{bag1,bagcal}.

\section{A new class of examples}

In this section we take  $\Hil=\Lc^2(\Bbb R)$ and we look for solutions of the commutation rule in (\ref{21}) of the following form:
\be
a=\frac{1}{\sqrt{2}}\left(\frac{d}{dx}+W_a(x)\right), \qquad b=\frac{1}{\sqrt{2}}\left(-\frac{d}{dx}+W_b(x)\right).
\label{31}\en
Here $W_j(x)$, $j=a,b$, are two functions such that $W_a(x)\neq \overline{W_b(x)}$. Hence $b^\dagger\neq a$. For future convenience we will assume that both $W_a(x)$ and $W_b(x)$ are sufficiently regular functions, for example that they are differentiable. We will show how to fix these functions in such a way Assumptions 1-4  are satisfied, while explicit choices of $W_a(x)$ and $W_b(x)$ will be considered in Section V. The starting point is that $a$ and $b$ are required to satisfy (\ref{21}): $[a,b]=\1$. A straightforward computation implies that $W_a(x)$ and $W_b(x)$ must satisfy the following simple equality:
\be
W_a(x)+W_b(x)=2x+\alpha,
\label{32}\en
where $\alpha$ is an arbitrary complex integration constant.  In particular, if we compute $N=ba$ and we use (\ref{32}) we get
\be
N=b\,a=\frac{1}{2}\left(-\frac{d^2}{dx^2}+U(x)\,\frac{d}{dx}+V(x)\right),
\label{33}\en
where $V(x):=W_a(x)(2x+\alpha-W_a(x))-W_a'(x)$ and $U(x):=2x+\alpha-2 W_a(x)$.

We observe that the approach we are adopting here is just an extension of the standard ideas of SUSY quantum mechanics, see \cite{CKS,jun} for a nice review. In Susy quantum mechanics the operator $N$ is just the hamiltonian of the system and $W(x)=W_a(x)=W_b(x)$ is the so-called super-potential, which is related to the (physical) potential via a Riccati equation. For this reason we still call both $W_a(x)$ and $W_b(x)$ superpotentials. In the first part of this section we will limit ourselves to real functions $W_a(x)$ and $W_b(x)$, extending our results to complex superpotentials in the second part. This will produce some interesting results, as we will see. Hence $\alpha$ in (\ref{32}) will be taken to be real, for the moment.

\vspace{2mm}

{\bf Remark:--} It may be interesting to observe that if $U(x)\equiv0$, then $N$ in (\ref{33}) looks like a one-dimensional hamiltonian (at least formally: we should  check for self-adjointness of the operator). This choice produces a  well known situation: $U(x)=0$ implies that $W_a(x)=x+\frac{\alpha}{2}$ and $V(x)=\left(x+\frac{\alpha}{2}\right)^2-1$ so that $N$ becomes, but for an unessential constant, the hamiltonian of a shifted harmonic oscillator,  $N=\frac{1}{2}\left(-\frac{d^2}{dx^2}+\left(x+\frac{\alpha}{2}\right)^2-1\right)$.This is in agreement with the fact that $W_b(x)=2x+\alpha-W_a(x)=W_a(x)$. Hence, if $\alpha$ is real, we deduce that $a^\dagger=b$ and we recover the ordinary CCR.

\vspace{2mm}

The next step consists in solving the two equations $a\varphi_0(x)=0$ and $b^\dagger\Phi_0(x)=0$, looking for solutions in $\Hil=\Lc^2(\Bbb R)$. These solutions are easily found: \be\varphi_0(x)=N_\varphi\exp\{-w_a(x)\},\qquad \Psi_0(x)=N_\Psi\exp\{-w_b(x)\},\label{34}\en where $N_\varphi$ and $N_\Psi$ are normalization constants. We have introduced here the following functions
\be
w_j(x)=\int W_j(x)\,dx,
\label{35}\en
$j=a,b$. The normalization constants can be written as $N_\varphi=\varphi_0(0)\,\exp\{w_a(0)\}$ and $N_\Psi=\Psi_0(0)\,\exp\{w_b(0)\}$. Of course since $\varphi_0(x)$ and $\Psi_0(x)$ must be square integrable, this imposes some constraints on the asymptotic behaviors of the $w_j(x)$'s and, as a consequence,  on the $W_j(x)$'s. We will consider this aspect in more details below, when  checking  that $\varphi_0(x)$ belongs to  $D^\infty(b)$, and that $\Psi_0(x)$ belongs to   $D^\infty(a^\dagger)$.

It is possible to prove that, independently of the analytic expressions of the $w_j(x)$'s, the following is true: $b^n\varphi_0(x)$ is proportional to a certain polynomial of degree $n$, $p_n(x)$, times  $\exp\{-w_a(x)\}$. In the same way we can also check that $(a^\dagger)^n\Psi_0(x)$ is proportional to a second polynomial of degree $n$, $q_n(x)$, times $\exp\{-w_b(x)\}$. Hence, if both $w_a(x)$ and $w_b(x)$ diverges to $+\infty$ when $|x|\rightarrow\infty$ at least as $|x|^\mu$ for some positive $\mu$, Assumptions 1 and 2 are satisfied.

More explicitly, if we define $\varphi_n(x)$ and $\Psi_n(x)$ as in (\ref{22}) and (\ref{25}), we can prove that
\be
\varphi_n(x)=N_n^\varphi\,p_n(x)\,\exp\{-w_a(x)\},\qquad N_n^\varphi=\frac{\varphi_0(0)\,\exp\{w_a(0)\}}{\sqrt{n!\,2^n}},
\label{36}\en
and
\be
\Psi_n(x)=N_n^\Psi\,p_n(x)\,\exp\{-w_b(x)\},\qquad N_n^\Psi=\frac{\Psi_0(0)\,\exp\{w_b(0)\}}{\sqrt{n!\,2^n}},
\label{37}\en
where an {\bf unique} polynomial $p_n(x)$ appears both in $\varphi_n(x)$ and in $\Psi_n(x)$. This is defined recursively as follows: $p_0(x)=1$ and $p_{n+1}(x)=(2x+\alpha)p_n(x)-p_n'(x)$, $n\geq 0$. Therefore $p_1(x)=2x+\alpha$, $p_2(x)=(2x+\alpha)^2-2$, $p_3(x)=(2x+\alpha)\left((2x+\alpha)^2-6\right)$ and so on. The proof of this claim is based on induction. Indeed, but for unessential constants, we have:
$$
b^{n+1}\varphi_0(x)\simeq b\left(p_n(x)\varphi_0(x)\right) \simeq -\frac{d}{dx}\left(p_n(x)\varphi_0(x)\right)+W_b(x)\left(p_n(x)\varphi_0(x)\right)\simeq
$$
$$
\simeq-p_n'(x)\,e^{-w_a(x)}-p_n(x)\frac{d}{dx}\,e^{-w_a(x)}+W_b(x)p_n(x)\,e^{-w_a(x)}=
\left(-p_n'(x)+(2x+\alpha)p_n(x)\right)\,e^{-w_a(x)}.
$$
To the same  conclusion we  arrive computing $(a^\dagger)^n\Psi_0(x)$.

Notice that, because of (\ref{32}), we also have that
\be
w_a(x)+w_b(x)=x^2+\alpha x+\beta,
\label{38}\en
where $\beta$ is a second integration constant which again we take real for the moment. Therefore, since each one of the functions $w_j(x)$ should diverge to $+\infty$ for large $|x|$ as $|x|^{\mu_j}$ for some $\mu_j>0$, $j=a,b$, this equality also fixes an upper bound for the $\mu_j$'s: we must have $0<\mu_j\leq2$, $j=a,b$.

Using (\ref{23}) and (\ref{26}) we have
\be
N\varphi_n(x)=n\,\varphi_n(x), \qquad N^\dagger \Psi_n(x)=a^\dagger\,b^\dagger \Psi_n(x)=n\Psi_n(x),
\label{39}\en
for all $n\geq0$. Moreover these functions are biorthogonal:
\be
\left<\varphi_n,\Psi_m\right>=\delta_{n,m}\left<\varphi_0,\Psi_0\right>,
\label{310}\en
i.e.,
\be
\sqrt{\frac{1}{n!m!2^{n+m}}}\int_{\Bbb{R}}p_n(x)p_m(x)e^{-(x^2+\alpha x+\beta)}\,dx=\delta_{n,m}\int_{\Bbb{R}}e^{-(x^2+\alpha x+\beta)}\,dx=
\delta_{n,m}\,\sqrt{\pi}\,e^{\alpha^2/4-\beta}
\label{311}\en

\vspace{3mm}

{\bf Remark:--} Our $p_n(x)$ are  related to Hermite polynomials since we can check that $p_n(x)=(-1)^ne^{x^2+\alpha x}\,\frac{d^n}{dx^n}\,e^{-(x^2+\alpha x)}$, for all $n\geq 0$. Furthermore, using this formula, is a standard computation to check  (\ref{311}) directly. In particular, it is simple to check that $\left<\varphi_n,\Psi_m\right>=0$ if $n\neq m$.

\vspace{2mm}

We are now ready to check if or when Assumption 3 is verified. For that we introduce as in Section II the sets $\F_\varphi=\left\{\varphi_n(x),\,n\geq0\right\}$ and $\F_\Psi=\left\{\Psi_n(x),\,n\geq0\right\}$, and we construct $\D_\varphi$ and $\D_\Psi$, which are respectively the linear span of  $\F_\varphi$ and $\F_\Psi$, and their closures $\Hil_\varphi$ and $\Hil_\Psi$. Hence, by construction, $\F_\varphi$ is complete in $\Hil_\varphi$ and $\F_\Psi$ is complete in $\Hil_\Psi$. We need to check whether   $\Hil_\varphi=\Hil_\Psi=\Hil$.

To check this we first observe that $\F_\varphi$ is complete in $\Hil$ if and only if the set $\F_\pi^{(a)}=\left\{\pi_n^{(a)}(x):=x^n\,e^{-w_a(x)},\,n\geq0\right\}$ is complete in $\Hil$. Analogously, $\F_\Psi$ is complete in $\Hil$ if and only if the set $\F_\pi^{(b)}=\left\{\pi_n^{(b)}(x):=x^n\,e^{-w_b(x)},\,n\geq0\right\}$ is complete in $\Hil$. But, \cite{kolfom}, if $\rho(x)$ is a Lebesgue-measurable function which is different from zero almost everywhere (a.e.) in $\Bbb R$ and if there exist two positive constants $\delta, C$ such that $|\rho(x)|\leq C\,e^{-\delta|x|}$ a.e. in $\Bbb R$, then the set $\left\{x^n\,\rho(x)\right\}$ is complete in $\Lc^2(\Bbb{R})$.

This suggests to consider the following constraint on the asymptotic behavior of the $w_j(x)$'s: for Assumption 3 to be satisfied it is sufficient that four positive constants $C_j,\,\delta_j$, $j=a,b$ exist such that
\be
\left|e^{-w_j(x)}\right|\leq C_j\,e^{-\delta_j|x|},
\label{311b}
 \en
 $j=a,b$, holds a.e. in $\Bbb R$. It should be noticed that this condition is stronger than the one required for Assumptions 1 and 2 to hold, since for instance is not satisfied if $w_a(x)\simeq|x|^{1/2}$ for large $|x|$.

Using now the biorthogonality of the sets $\F_\varphi$ and $\F_\Psi$, and their completeness in $\Lc^2(\Bbb R)$, it is now clear that, given any function $f(x)\in \Lc^2(\Bbb R)$,
\be
f(x)=\frac{1}{\left<\Psi_0,\varphi_0\right>}\,\sum_{k=0}^\infty\,\left<\Psi_k,f\right>\varphi_k(x)=
\frac{1}{\left<\varphi_0,\Psi_0\right>}\,\sum_{k=0}^\infty\,\left<\varphi_k,f\right>\Psi_k(x).
\label{312}\en
This can also be written in the usual bra-ket notation as in (\ref{211}):
\be
\frac{1}{\left<\Psi_0,\varphi_0\right>}\,\sum_{k=0}^\infty
|\varphi_k><\Psi_k|=\frac{1}{\left<\varphi_0,\Psi_0\right>}\,\sum_{k=0}^\infty
|\Psi_k><\varphi_k|=\1,
\label{313}\en
where $\1$ is the identity operator on $\Lc^2(\Bbb R)$. The overall constants $\left<\Psi_0,\varphi_0\right>^{-1}$ and $\left<\varphi_0,\Psi_0\right>^{-1}$ appear because of (\ref{310}).

Suppose now that we are interested in going from $\F_\varphi$ to $\F_\Psi$ and viceversa. In other words we are now interested to introduce an invertible operator $S$ mapping each $\varphi_n$ into $\Psi_n$, $S\varphi_n=\Psi_n$, whose inverse of course satisfies $S^{-1}\Psi_n=\varphi_n$, for all $n\geq 0$. As we have already discussed, both $S$ and $S^{-1}$ may be unbounded, so a special care is required. A formal expansion of these operators is
\be
S=\frac{1}{\left<\Psi_0,\varphi_0\right>}\,\sum_{k=0}^\infty
|\Psi_k><\Psi_k|,\qquad S^{-1}=\frac{1}{\left<\varphi_0,\Psi_0\right>}\,\sum_{k=0}^\infty
|\varphi_k><\varphi_k|.
 \label{314}\en
It is quite easy to check that, again at least formally, $SS^{-1}=S^{-1}S=\1$. Due to the analytic expressions (\ref{36}) and (\ref{37}) of our wave-functions $\varphi_n(x)$ and $\Psi_n(x)$, we  deduce that
\be
S=\frac{\Psi_0(0)}{\varphi_0(0)}\,\frac{e^{\delta w_a(x)}}{e^{\delta w_b(x)}},\qquad
S^{-1}=\frac{\varphi_0(0)}{\Psi_0(0)}\,\frac{e^{\delta w_b(x)}}{e^{\delta w_a(x)}},
\label{315}\en
where we have introduced $\delta w_j(x):=w_j(x)-w_j(0)$, $j=a,b$. A sufficient condition for both $S$ and $S^{-1}$ to be bounded operators from $\Lc^2(\Bbb R)$ into itself is now easily deduced using  equation (\ref{38}), which implies that $\frac{e^{\delta w_a(x)}}{e^{\delta w_b(x)}}=\frac{e^{2\delta w_a(x)}}{e^{x^2+\alpha x}}$ and $\frac{e^{\delta w_b(x)}}{e^{\delta w_a(x)}}=\frac{e^{x^2+\alpha x}}{e^{2\delta w_a(x)}}$:

{\em if $\frac{e^{2\delta w_a(x)}}{e^{x^2+\alpha x}}\in \Lc^\infty(\Bbb R)$, then $S\in B(\Lc^2(\Bbb R))$. Moreover, if $\frac{e^{x^2+\alpha x}}{e^{2\delta w_a(x)}}\in \Lc^\infty(\Bbb R)$,  also $S^{-1}\in B(\Lc^2(\Bbb R))$.}

We recall that the existence of such an operator is equivalent to the fact that both $\F_\varphi$ and $\F_\Psi$ are Riesz bases, see \cite{you,bag1}. It is clear, however, that the above boundedness assumption imposes further limitations on the functions $w_j(x)$'s and, as a consequence, on the $W_j(x)$'s  defining  $a$ and $b$. For this reason in  Section V we will consider examples in which this last requirement is not satisfied, so that $\F_\varphi$ and $\F_\Psi$ are biorthogonal (but not necessarily Riesz) bases of $\Lc^2(\Bbb R)$, and other examples in which they do are Riesz bases since they are related by a bounded operator with bounded inverse. A similar situation will be discussed in Section VI in a slightly different context: we will deduce a sufficient condition for $\F_\varphi$ and $\F_\Psi$ to be Riesz bases, condition which is related to two families of related coherent states.

\subsection{What if the superpotentials are complex?}

The above result on the boundedness of $S$ and $S^{-1}$ displays the relevance of $\alpha$: suppose $\alpha\neq0$. If $\delta w_a(x)$ behaves as $x^2$ for large $|x|$ then $\frac{e^{2\delta w_a(x)}}{e^{x^2+\alpha x}}$ and $\frac{e^{x^2+\alpha x}}{e^{2\delta w_a(x)}}$ cannot be bounded for both positive and negative $x$. This is not true if $\alpha$ is purely imaginary, of course: both these fractions are bounded functions so that $S$ and $S^{-1}$ are  bounded operators. That's why this choice is so interesting for us. In this case formulas (\ref{36}) and (\ref{37}) look like
\be
\varphi_n(x)=N_n^\varphi\,p_n(x)\,\exp\{-w_a(x)\},\qquad N_n^\varphi=\frac{\varphi_0(0)\,\exp\{w_a(0)\}}{\sqrt{n!\,2^n}},
\label{316}\en
and
\be
\Psi_n(x)=N_n^\Psi\,\overline{p_n(x)}\,\exp\{-\overline{w_b(x)}\},\qquad N_n^\Psi=\frac{\Psi_0(0)\,\exp\{\overline{w_b(0)}\}}{\sqrt{n!\,2^n}},
\label{317}\en
where $p_n(x)$ is defined as before.   Next we find that
\be
\left<\varphi_n,\Psi_m\right>=\delta_{n,m}\left<\varphi_0,\Psi_0\right>=\delta_{n,m}\,\sqrt{\pi}\,\Psi_0(0)\,\overline{\varphi_0(0)}\,e^{\overline{\alpha}^2/4}
\label{318}\en
The main difference arises in the analytic expression of $S$ and of $S^{-1}$. For that it is necessary to introduce the operator of complex conjugation $C$ which acts on a generic function $f(x)\in\Lc^2(\Bbb R)$ as follows: $Cf(x)=\overline{f(x)}$. $C$ is antilinear and idempotent: $C^2=\1$. Hence $C=C^{-1}$. While formulas  (\ref{314}) are still true,  (\ref{315}) must be replaced by
\be
S=C\,\frac{\overline{\Psi_0(0)}}{\varphi_0(0)}\,\frac{e^{\delta w_a(x)}}{e^{\delta w_b(x)}},\qquad
S^{-1}=\frac{\varphi_0(0)}{\overline{\Psi_0(0)}}\,\frac{e^{\delta w_b(x)}}{e^{\delta w_a(x)}}\,C,
\label{319}\en
It is a straightforward computation to check that they are indeed the inverse of one another and that $S\varphi_n(x)=\Psi_n(x)$, $S^{-1}\Psi_n(x)=\varphi_n(x)$ for all $n\geq0$. As for the norms of $S$ and $S^{-1}$, they are not affected by the presence of $C$ and of the complex conjugation in $\Psi_0(0)$. For this reason the same conditions as above are recovered: $S$ and $S^{-1}$ are bounded if both $\frac{e^{2\delta w_a(x)}}{e^{x^2+\alpha x}}$ and  $\frac{e^{x^2+\alpha x}}{e^{2\delta w_a(x)}}$ belong to $\Lc^\infty(\Bbb R)$. This means that, if $\alpha$ is purely imaginary, then both $S$ and $S^{-1}$ can be bounded and, as a consequence, $\F_\varphi$ and $\F_\Psi$ are Riesz bases. Once again we stress that, if $\alpha$ is real, this is never possible.

\vspace{4mm}

Due to the explicit form of the operator $S$ we have deduced before it is now interesting to consider the orthonormal basis arising, for instance, from the action of $S^{-1/2}$ onto the Riesz basis $\F_\varphi$: under the Assumptions 1-4 of Section 2, the functions $\hat\varphi_n(x):=S^{-1/2}\varphi_n(x)$ give indeed an orthonormal basis of $\Lc^2(\Bbb R)$. The analytic expression of these vectors is easily found at least for real superpotentials, while it is  less evident when $W_a(x)$ and $W_b(x)$ are complex. In this first case, using (\ref{315}), we find that
\be
\hat\varphi_n(x)=\sqrt{\frac{\Psi_0(0)\varphi_0(0)}{2^n\,n!}}\,p_n(x)\,e^{-\frac{1}{2}(x^2+\alpha x)}.
\label{48}
\en
However, if $W_a(x)$ and $W_b(x)$ are real, we have already seen that $S$ and/or $S^{-1}$ are unbounded so that a certain care in the definition of, say, $S^{-1/2}$ is required. However, equation (\ref{48}) holds true since $\varphi_n(x)$ belongs to the domain of $S^{-1/2}$, which turns out to be well defined.
It is not hard to see that these functions reduce to the standard Hermite functions under suitable conditions, see also Example 1 of  Section V.

\section{Consequences of our construction}

In \cite{bag1} we have seen how pseudo-bosons are related to coherent states, intertwining operators and PHQM. In this section we will see how  these relations look like in this present settings.

First of all it is possible to check that, if $S$ and $S^{-1}$ are both bounded and self-adjoint,
\be
b=S^{-1}a^\dagger S,\qquad b^\dagger=S\,a\,S^{-1}.
\label{41}
\en
Of course from (\ref{41}) we also deduce that $a=S^{-1}b^\dagger S$ and  $a^\dagger=S\,b\,S^{-1}$. These equalities imply the following intertwining equations:
\be
S\,N=\N\,S,\qquad N\,S^{-1}=S^{-1}\N
\label{42}\en
 which, of course, are in agreement with the fact that $N$ and $\N$ are isospectrals and that their eigenstates are related by $S$ via the equation $S\varphi_n(x)=\Psi_n(x)$.  We refer to \cite{intop,bagintop} for more results on intertwining operators.
 As  noticed in \cite{bag1},  condition (\ref{42})  states that $N$ and $\N$ are pseudo-hermitian conjugate via $S$, \cite{mosta}. We recall that this was just the main motivation in \cite{tri} for considering the  commutation rules in (\ref{21}).

\vspace{3mm}

Under Assumptions 1-4, some kind of {\em bi-coherent states} can be introduced, \cite{bag1}. Let us define the $z$-dependent operators
\be
U(z)=\exp\{z\,b-\overline{z}\,a\}, \qquad V(z)=\exp\{z\,a^\dagger-\overline{z}\,b^\dagger\},
\label{43}\en
$z\in\Bbb{C}$, and the following vectors:
\be
\varphi(z)=U(z)\varphi_0=e^{-|z|^2/2}\,\sum_{n=0}^\infty\,\frac{z^n}{\sqrt{n!}}\,\varphi_n,\qquad \Psi(z)=V(z)\,\Psi_0=e^{-|z|^2/2}\,\sum_{n=0}^\infty\,\frac{z^n}{\sqrt{n!}}\,\Psi_n.
\label{44}\en
Both these series are convergent for all possible $z\in\Bbb{C}$ due to the fact that $S$ and $S^{-1}$ are bounded, \cite{bag1}. These vectors are called {\em coherent} since they are eigenstates of some lowering operators. Indeed we can check that
\be
a\varphi(z)=z\varphi(z), \qquad b^\dagger\Psi(z)=z\Psi(z),
\label{45}\en
for all $z\in\Bbb{C}$. Moreover we have
\be
\frac{1}{\pi}\int_{\Bbb{C}}\,dz |\varphi(z)><\varphi(z)|=S^{-1}, \qquad
\frac{1}{\pi}\int_{\Bbb{C}}\,dz |\Psi(z)><\Psi(z)|=S,
\label{46}\en
and
\be
\frac{1}{\pi}\int_{\Bbb{C}}\,dz |\varphi(z)><\Psi(z)|=
\frac{1}{\pi}\int_{\Bbb{C}}\,dz |\Psi(z)><\varphi(z)|=\1,
\label{47}\en
and this is why we call them bi-coherent. They can be associated to {\em standard} coherent states (i.e. coherent states built out of an orthonormal basis) if $S$ and $S^{-1}$ are bounded, because of the properties of Riesz bases. We don't give the details of this construction here since they are discussed in \cite{bag1}. In Section VI we will show that (\ref{47}) can be used to check whether $\F_\varphi$ and $\F_\Psi$ are Riesz bases or not, regardless of any information on $S$ and $S^{-1}$.

\section{Explicit examples}

We will now discuss three examples of our construction showing how easily Riesz bases can be constructed using a sort of perturbation technique applied to the harmonic oscillator. We will also consider an example which at a first sight seems to work but, because of a mathematical detail which should be properly considered, doesn't work at all.

\vspace{2mm}

{\bf Example 1:} we fix here $W_a(x)=x$. Hence $W_b(x)$ is fixed as in (\ref{32}) just requiring that the related operators $a$ and $b$, see (\ref{31}), satisfy $[a,b]=\1$. Hence $W_b(x)=x+\alpha$ where, for the moment, we don't make any assumption on $\alpha$. Then we get $w_a(x)=\frac{x^2}{2}+k_a$ and $w_b(x)=\frac{x^2}{2}+\alpha x+k_b$. Here $k_a$ and $k_b$ are two integration constants which are, in general, complex. Their sum gives back $\beta$, see (\ref{38}).

Using the inequality $e^{-x^2/2}\leq 2 e^{-|x|}$ it is  clear that $\left|e^{-w_a(x)}\right|\leq 2 \left|e^{-k_a}\right|\,e^{-|x|}$. Hence the set $\F_\varphi$ is a basis of $\Lc^2(\Bbb R)$. The same estimate, with $k_a$ replaced by $k_b$, can be repeated for $\left|e^{-w_b(x)}\right|$ if $\alpha$ is purely imaginary. If $\alpha$ is real this estimate does not work. However we get that $\left|e^{-w_b(x)}\right|\leq 2 \left|e^{-k_b}\right|\,e^{\alpha^2/2}\,e^{|\alpha|}\,e^{-|x|}$, which again implies that $\F_\Psi$ is a basis of $\Lc^2(\Bbb R)$.

A major difference arises if we require to these sets to be Riesz bases. Indeed, if $\alpha$ is purely imaginary,  $\left|\frac{e^{2\delta w_a(x)}}{e^{x^2+\alpha x}}\right|=\left|\frac{e^{x^2+\alpha x}}{e^{2\delta w_a(x)}}\right|=1$, so that both $S$ and $S^{-1}$ are bounded operators and $\F_\varphi$ and $\F_\Psi$ are automatically Riesz bases. If we rather look for real $\alpha$ such that the above fractions are both bounded functions, then the only possible choice is $\alpha=0$. Under this constraint the set of vectors in (\ref{48}) is nothing but the standard Hermite functions (at most but for an unessential overall phase). This is not surprising since, if $\alpha=0$, then $W_a(x)=W_b(x)$ and $a=b^\dagger$: we go back to the standard canonical commutation relation.

\vspace{2mm}

{\bf Example 2:} our above mentioned perturbation technique consists in adding a suitable perturbation to a {\em zero order} superpotential  $W_a^o(x)=x$. In particular we take here $W_a(x)=x+\cos(x)$. Hence, by (\ref{32}), $W_b(x)=x-\cos(x)+\alpha$. Consequently we have $w_a(x)=\frac{x^2}{2}+\sin(x)+k_a$ and $w_b(x)=\frac{x^2}{2}-\sin(x)+\alpha x+k_b$.

With the same considerations as above we can prove that, for all $x\in\Bbb R$,  $\left|e^{-w_a(x)}\right|\leq 2 \left|e^{1-k_a}\right|\,e^{-|x|}$ and  $\left|e^{-w_b(x)}\right|\leq 2 \left|e^{1-k_b}\,e^{\alpha^2/2}\right|\,e^{|\alpha|}\,e^{-|x|}$. Hence both $\F_\varphi$ and $\F_\Psi$ are bases for $\Lc^2(\Bbb R)$, independently of the nature of $\alpha$. However, if we want these to be Riesz bases, again a sufficient condition is that $\alpha$ is purely imaginary. Indeed with this choice both $\left|\frac{e^{2\delta w_a(x)}}{e^{x^2+\alpha x}}\right|$ and $\left|\frac{e^{x^2+\alpha x}}{e^{2\delta w_a(x)}}\right|$ are bounded functions, as desired. The operators $a$ and $b$ in (\ref{31}) are
$$
a=\frac{1}{\sqrt{2}}\left(\frac{d}{dx}+x+\cos(x)\right), \qquad b=\frac{1}{\sqrt{2}}\left(-\frac{d}{dx}+x-\cos(x)+i\alpha_r\right),
$$
for any fixed real $\alpha_r$.

\vspace{2mm}

{\bf Example 3:} Example 2 is a particular case of a rather more general situation which can be constructed by considering a function $\Phi(x)$ which is differentiable and bounded in $\Bbb R$: $-\infty<\Phi_m\leq \Phi(x)\leq \Phi_M<\infty$, $\forall x\in \Bbb R$. Now we define $W_a(x)=x+\Phi'(x)$. Hence, by (\ref{32}), $W_b(x)=x-\Phi'(x)+\alpha$. Consequently we have $w_a(x)=\frac{x^2}{2}+\Phi(x)+k_a$ and $w_b(x)=\frac{x^2}{2}-\Phi(x)+\alpha x+k_b$. The following inequalities hold:  $\left|e^{-w_a(x)}\right|\leq 2 \left|e^{-k_a}\right|\,e^{-\Phi_m}\,e^{-|x|}$ and  $\left|e^{-w_b(x)}\right|\leq 2 e^{\Phi_M} \left|e^{-k_b}\,e^{\alpha^2/2}\right|\,e^{|\alpha|}\,e^{-|x|}$. Therefore both $\F_\varphi$ and $\F_\Psi$ are bases for $\Lc^2(\Bbb R)$, independently of the nature of $\alpha$. As before, however, if $\alpha$ is purely imaginary then these are also Riesz bases, for the usual reason:  both $\left|\frac{e^{2\delta w_a(x)}}{e^{x^2+\alpha x}}\right|$ and $\left|\frac{e^{x^2+\alpha x}}{e^{2\delta w_a(x)}}\right|$ are bounded functions, as desired. The operators $a$ and $b$ in (\ref{31}) are
$$
a=\frac{1}{\sqrt{2}}\left(\frac{d}{dx}+x+\Phi'(x)\right), \qquad b=\frac{1}{\sqrt{2}}\left(-\frac{d}{dx}+x-\Phi'(x)+i\alpha_r\right),
$$
where $\alpha_r$ is an arbitrary but fixed real quantity.

A choice of $\Phi(x)$ which is not bounded but still {\em under control} is $\Phi(x)=\frac{\alpha x}{2}$. This produces $W_a(x)=W_b(x)=x+\frac{\alpha}{2}$, which is nothing but the shifted harmonic oscillator.

\vspace{2mm}

{\bf Example 4:} It may seem reasonable and interesting to replace the Hilbert space considered so far, $\Lc^2(\Bbb R)$, with another Hilbert space of functions defined on a bounded domain: $\Hil=\Lc^2(X)$, where $X=[l,L]$ and $L-l<\infty$. Suppose now that $a$ and $b$ are defined as in (\ref{31}) and that $D(a)=\{f(x)\in\Lc^2(X):\,f'(x)+W_a(x)f(x)\in\Lc^2(X)\}$ and $D(b)=\{f(x)\in\Lc^2(X):\,-f'(x)+W_b(x)f(x)\in\Lc^2(X)\}$. The adjoint of $a$ and $b$ can be computed with standard techniques and it turns out, in particular, that $b^\dagger=\frac{1}{\sqrt{2}}\left(\frac{d}{dx}+\overline{W_b(x)}\right)$ with $D(b^\dagger)=\{f(x)\in\Lc^2(X):\,f'(x)+\overline{W_b(x)}f(x)\in\Lc^2(X), \mbox{ and } f(l)=f(L)=0\}$. Notice that in the first three examples of this section, where we have essentially $l=-\infty$ and $L=\infty$, $f(l)=f(L)=0$ was automatically satisfied because of the asymptotic behavior  of any differentiable functions of $\Lc^2(\Bbb R)$. Now, in order to verify Assumption 2, we should find a non-zero function $\Psi_0(x)$ in the domain of $b^\dagger$ which is annihilated by $b^\dagger$ itself. But this is impossible since the only function which satisfies $b^\dagger\Psi_0(x)=0$ is $\Psi_0(x)=N_0^\Psi\,\exp\{-w_b(x)\}$, which cannot be zero in $l$ and $L$ except if it is identically zero.  So Assumption 2 is violated here, while Assumption 1 still holds true.

\section{A different example}

The example which we consider here is motivated by the paper \cite{sun}, where the author consider a  simple modification of the CCR  in connection with non-hermitian quantum systems.  The starting point is a lowering operator $a$ acting on an Hilbert space $\Hil$ which, together with its adjoint $a^\dagger$, satisfies the CCR $[a,a^\dagger]=\1$. Let us now consider the following simple deformation of $a$ and $a^\dagger$:
$$
A_\alpha=a-\alpha\,\1,\qquad B_\beta=a^\dagger-\beta\,\1.
$$
It is clear that $[A_\alpha,B_\beta]=\1$ and that, if $\alpha\neq\overline\beta$, $A_\alpha\neq B_\beta^\dagger$. To check Assumption 1 first of all we have to find a vector $\varphi_0(\alpha)$ such that $A_\alpha\varphi_0(\alpha)=0$. Such a vector clearly exists since $A_\alpha\varphi_0(\alpha)=0$ can be written as $a\varphi_0(\alpha)=\alpha\varphi_0(\alpha)$. Hence it is enough to take $\varphi_0(\alpha)$ as the following coherent state:
$$
\varphi_0(\alpha)=U(\alpha)\varphi_0,
$$
where $U(\alpha)=e^{\alpha a^\dagger-\overline\alpha a}=e^{-|\alpha|^2/2}e^{\alpha a^\dagger} e^{\overline{\alpha} a}$ and $\varphi_0$ is the vacuum of $a$: $a\varphi_0=0$. Incidentally we recall that the set $\E=\{\varphi_n=\frac{(a^\dagger)^n}{\sqrt{n!}}\varphi_0,\,n\geq 0\}$ is an orthonormal basis of $\Hil$. The fact that $\varphi_0(\alpha)$ belongs to $D^\infty(B_\beta)$ follows from the following estimate:  $$\|B_\beta^l\varphi_0(\alpha)\|\leq l! e^{|\overline\alpha-\beta|},$$ which holds for all $l\geq0$.

Let us now define a second vector $\Psi_0(\beta):=U(\overline\beta)\varphi_0$. This is a second coherent state, labeled by $\beta$, which satisfies Assumption 2: $B_\beta^\dagger\Psi_0(\beta)=0$ and $\Psi_0(\beta)\in D^\infty(A_\alpha^\dagger)$: as before we get $\|(A_\alpha^\dagger)^l\Psi_0(\beta)\|\leq l! e^{|\overline\alpha-\beta|}$, for all $l\geq0$.

Now we introduce, following (\ref{22}) and (\ref{25}), the vectors
\be
\varphi_n(\alpha,\beta):=\frac{B_\beta^n}{\sqrt{n!}}\,\varphi_0(\alpha), \qquad \Psi_n(\alpha,\beta):=\frac{(A_\alpha^\dagger)^n}{\sqrt{n!}}\,\Psi_0(\beta).
\label{61}\en
It is possible to rewrite $\varphi_n(\alpha,\beta)$ and $\Psi_n(\alpha,\beta)$ in many different equivalent forms. For instance we have
\be
\varphi_n(\alpha,\beta)=V_\varphi(\alpha,\beta)\varphi_n, \qquad V_\varphi(\alpha,\beta)=e^{-|\alpha|^2/2}e^{\alpha a^\dagger}e^{-\beta a}=e^{\alpha(\beta-\overline\alpha)/2}e^{\alpha a^\dagger-\beta a}
\label{62}\en
and
\be
\Psi_n(\alpha,\beta)=V_\Psi(\alpha,\beta)\varphi_n, \qquad V_\Psi(\alpha,\beta)=e^{-|\beta|^2/2}e^{\overline\beta a^\dagger}e^{-\overline\alpha a}=e^{\overline\beta (\overline\alpha-\beta)/2}e^{\overline\beta  a^\dagger-\overline\alpha  a},
\label{63}\en
for all $n\geq0$. The operators $V_\varphi$ and $V_\Psi$, which are in general unbounded (see below), are related by
\be
V_\Psi^\dagger(\alpha,\beta)=V_\varphi^{-1}(\alpha,\beta)\,
\exp\left\{\alpha\beta-\frac{1}{2}(|\alpha|^2+|\beta|^2)\right\}.
\label{64}\en
Notice that they are densely defined in $\Hil$ since each $\varphi_n$ belongs to $D(V_\varphi)$ and $D(V_\Psi)$.

\vspace{2mm}

{\bf Remark:--} It is interesting to notice that, if $\beta=\overline\alpha$, then everything collapses: $B_\beta^\dagger=A_\alpha$, $\varphi_0(\alpha)=\Psi_0(\beta)$, $\varphi_n(\alpha,\beta)=\Psi_n(\alpha,\beta)$ and, finally, $V_\varphi$ and $V_\Psi$ are unitary operators.\vspace{2mm}

Defining as usual $\F_\varphi^{(\alpha,\beta)}=\{\varphi_n(\alpha,\beta), n\geq0\}$ and $\F_\Psi^{(\alpha,\beta)}=\{\Psi_n(\alpha,\beta), n\geq0\}$, it is possible to check that both these sets are complete in $\Hil$. This is a subtle point: indeed it is quite easy to prove for instance that, if $f\in D(V_\varphi)$ is orthogonal to all the $\varphi_n(\alpha,\beta)$, $n\geq0$, then $f=0$. However, this does not necessarily  implies that taken $h\in\Hil$, $h\notin D(V_\varphi)$, such that $\left<h,\varphi_n(\alpha,\beta)\right>=0$ for all $n\geq0$,  then $h=0$, even if $D(V_\varphi)$ is dense in $\Hil$. Therefore, to prove the completeness of $\F_\varphi$, it is convenient to rewrite $\varphi_n(\alpha,\beta)$, in the following equivalent way:
$$
\varphi_n(\alpha,\beta)=\frac{1}{\sqrt{n!}}\,e^{(\alpha\,\beta-\overline{\alpha}\,\overline{\beta})/2}\,
U(\overline{\beta}) (a^\dagger)^n U(\alpha-\overline{\beta})\varphi_0,
$$
and to use induction on $n$ and the properties of the unitary operators $U(\overline{\beta})$ and $U(\alpha-\overline{\beta})$. With the same techniques we can check that $\F_\Psi$ is complete in $\Hil$.

The vectors in $\F_\varphi$ and $\F_\Psi$ are also biorthogonal: using (\ref{62}), (\ref{63}) and (\ref{64}) we find
$$
\left<\varphi_n(\alpha,\beta),\Psi_m(\alpha,\beta)\right>=\left<V_\varphi(\alpha,\beta)\varphi_n,
V_\Psi(\alpha,\beta)\varphi_m\right>=$$
$$=\left<V_\Psi^\dagger(\alpha,\beta) V_\varphi(\alpha,\beta)\varphi_n,\varphi_m\right>=\delta_{n,m}\exp\{\overline\alpha\overline\beta-\frac{1}{2}(|\alpha|^2+|\beta|^2)\}.
$$
Of course biorthonormality could be recovered changing the normalization of $\varphi_0(\alpha,\beta)$ and $\Psi_0(\alpha,\beta)$.

As for Assumption 4, the situation is a bit more difficult:  if $\beta=\overline\alpha$, then both $\F_\varphi$ and $\F_\Psi$ are the same orthonormal basis. However, whenever $\beta\neq\overline\alpha$, it is possible to prove that neither $\F_\varphi$ nor $\F_\Psi$ can be Riesz bases (or, equivalently, Bessel sequences). Indeed, let us suppose, e.g., that $\F_\varphi$ is a Riesz basis. Then $\|\varphi_n(\alpha,\beta)\|$  must be uniformly bounded in $n$ by a constant related to the norm of the frame operator of $\F_\varphi$, \cite{bag1,bagcal}. On the other way, a direct estimates show that  $\|\varphi_n(\alpha,\beta)\|^2\geq 1+n|\overline\alpha-\beta|^2$, $\forall n \geq 0$. Hence, uniform boundedness is compatible only with $\overline\alpha=\beta$, and we go back to the trivial situation.
Moreover, since $\|V_\varphi(\alpha,\beta)\varphi_n\|^2=\|\varphi_n(\alpha,\beta)\|^2\geq 1+n|\overline\alpha-\beta|^2$, then $V_\varphi(\alpha,\beta)$ is, in general, unbounded, as already stated. Hence, $\F_\varphi$ and $\F_\Psi$ cannot be Riesz bases since, \cite{you},  two biorthogonal bases can be Riesz bases if and only if they are connected by a bounded operator with bounded inverse.

\subsection{Coherent states}

We now construct the coherent states associated to the model discussed in this section, working first in the coordinate representation. For that, calling $z=z_r+iz_i$, $z_r, z_i\in{\Bbb R}$, and $a=\frac{1}{\sqrt{2}}\left(x+\frac{d}{dx}\right)$,  the normalized solution of the eigenvalue equation $a\eta(x;z)=z\eta(x;z)$,  is, with a certain choice of phase in the normalization,
$\eta(x;z)=\frac{1}{\pi^{1/4}}\exp\left\{-\frac{x^2}{2}+\sqrt{2}\,z\,x-z_r^2\right\}$. Hence, calling $\Phi_\alpha(x;z)$ the eigenstate of $A_\alpha$ with eigenvalue $z$, $A_\alpha\Phi_\alpha(x;z)=z\Phi_\alpha(x;z)$, we get $\Phi_\alpha(x;z)=\eta(x;z+\alpha)$. Analogously, the eigenstate of $B_\beta^\dagger$ with eigenvalue $z$, $B_\beta^\dagger\Psi_\beta(x;z)=z\Psi_\beta(x;z)$, is  $\Psi_\beta(x;z)=\eta(x;z+\overline\beta)$. It is clear that $$\frac{1}{\pi}\int_{\Bbb{C}} \,dz\, |\Phi_\alpha(x;z)><\Phi_\alpha(x;z)|=\frac{1}{\pi}\int_{\Bbb{C}}\,dz\, |\Psi_\beta(x;z)><\Psi_\beta(x;z)|=\1.$$
On the other hand, taken $f,g\in\Hil$, we get
$$
\left<f,\left(\frac{1}{\pi}\int_{\Bbb{C}}\,dz\, |\Phi_\alpha(x;z)><\Psi_\beta(x;z)|\right)g\right>=
e^{-(\alpha_r-\beta_r)^2/2}\,\int_{\Bbb{R}}\,dx\,\overline{f(x)}\,g(x)e^{i\sqrt{2}\,(\alpha_i+\beta_i)},
$$
with obvious notation. Therefore, if $\alpha\neq\overline \beta$, the integral over $\Bbb C$ above does not produce the identity operator! The same conclusion can be recovered working as in Section IV. Following (\ref{43}) we  introduce
\be
\tilde U_{\alpha,\beta}(z)=\exp\left\{z\,B_\beta-\overline z A_\alpha\right\},\qquad \tilde V_{\alpha,\beta}(z)=\exp\left\{z\,A_\alpha^\dagger-\overline z B_\beta^\dagger\right\},
\label{65}\en
and two associated vectors
$$
\tilde\varphi_{\alpha,\beta}(z)=\tilde U_{\alpha,\beta}(z)\varphi_0, \qquad \tilde\Psi_{\alpha,\beta}(z)=\tilde V_{\alpha,\beta}(z)\varphi_0.
$$
They satisfy $A_\alpha\tilde\varphi_{\alpha,\beta}(z)=z\tilde\varphi_{\alpha,\beta}(z)$ and $B_\beta^\dagger\tilde\Psi_{\alpha,\beta}(z)=z\tilde\Psi_{\alpha,\beta}(z)$, as expected. However we find
$$
\frac{1}{\pi}\int_{\Bbb{C}}\,dz\, |\tilde\varphi_{\alpha,\beta}(z)><\tilde\Psi_{\alpha,\beta}(z)|=U(\alpha)\left(
\frac{1}{\pi}\int_{\Bbb{C}}\,dz\, |\varphi_0(z)><\varphi_0(z)|\,e^{z(\overline\alpha-\beta)+\overline z(\overline\beta-\alpha)}\right)
U(\overline\beta)^\dagger
$$
which returns $\1$ if $\alpha=\overline\beta$, but not otherwise. This is a particular case of a general result:

\begin{prop}
If $\F_\varphi^{(\alpha,\beta)}$ and $\F_\Psi^{(\alpha,\beta)}$ are Riesz bases and biorthogonal then, defining $\tilde\varphi_{\alpha,\beta}(z)$ and $\tilde\Psi_{\alpha,\beta}(z)$ as above, they satisfy the resolution of the identity
$\frac{1}{\pi}\int_{\Bbb{C}}\,dz\, |\tilde\varphi_{\alpha,\beta}(z)><\tilde\Psi_{\alpha,\beta}(z)|=\1$.
\end{prop}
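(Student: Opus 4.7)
The plan is to expand $\tilde\varphi_{\alpha,\beta}(z)$ and $\tilde\Psi_{\alpha,\beta}(z)$ in their respective biorthogonal Riesz bases, carry out the Gaussian integration term by term, and conclude via the resolution of identity (\ref{211}) which holds for any biorthogonal pair of Riesz bases.

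First I would apply a Baker--Campbell--Hausdorff decomposition to $\tilde U_{\alpha,\beta}(z)=\exp\{zB_\beta-\overline z A_\alpha\}$. Because $[zB_\beta,-\overline z A_\alpha]=|z|^2\,\1$ is central, we have
\[
\tilde U_{\alpha,\beta}(z)=e^{-|z|^2/2}\,e^{zB_\beta}\,e^{-\overline z A_\alpha}.
\]
Acting on the generating vector $\varphi_0(\alpha)$, which is annihilated by $A_\alpha$, the rightmost factor is trivial and the middle factor produces the power series in $B_\beta$, so by the definition (\ref{61}),
\[
\tilde\varphi_{\alpha,\beta}(z)=e^{-|z|^2/2}\sum_{n=0}^\infty\frac{z^n}{\sqrt{n!}}\,\varphi_n(\alpha,\beta).
\]
An identical computation applied to $\tilde V_{\alpha,\beta}(z)=\exp\{zA_\alpha^\dagger-\overline z B_\beta^\dagger\}$ acting on $\Psi_0(\beta)$ yields
\[
\tilde\Psi_{\alpha,\beta}(z)=e^{-|z|^2/2}\sum_{n=0}^\infty\frac{z^n}{\sqrt{n!}}\,\Psi_n(\alpha,\beta).
\]
This is exactly where the Riesz basis hypothesis is needed: it guarantees that $\|\varphi_n(\alpha,\beta)\|$ and $\|\Psi_n(\alpha,\beta)\|$ are uniformly bounded in $n$, so both series converge in norm for every $z\in\Bbb{C}$, reproducing the bi-coherent structure of (\ref{44}).

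Next, I would form the rank-one integrand
\[
|\tilde\varphi_{\alpha,\beta}(z)\rangle\langle\tilde\Psi_{\alpha,\beta}(z)|=e^{-|z|^2}\sum_{n,m\geq 0}\frac{z^n\,\overline z^m}{\sqrt{n!\,m!}}\,|\varphi_n(\alpha,\beta)\rangle\langle\Psi_m(\alpha,\beta)|
\]
and evaluate its sesquilinear form $\langle f,(\cdot)g\rangle$ against arbitrary $f,g\in\Hil$. The classical Gaussian moment $\frac{1}{\pi}\int_\Bbb{C}e^{-|z|^2}z^n\overline z^m\,dz=\delta_{n,m}\,n!$ (verified in polar coordinates) collapses the double sum after one interchanges sum and integral, the latter being justified by dominated convergence once one uses the Riesz-basis bounds on the expansion coefficients $|\langle\Psi_n,g\rangle|$ and $|\langle\varphi_n,f\rangle|$. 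One then obtains
\[
\frac{1}{\pi}\int_\Bbb{C}dz\,|\tilde\varphi_{\alpha,\beta}(z)\rangle\langle\tilde\Psi_{\alpha,\beta}(z)|=\sum_{n=0}^\infty|\varphi_n(\alpha,\beta)\rangle\langle\Psi_n(\alpha,\beta)|,
\]
and the right-hand side equals $\1$ by the resolution of identity (\ref{211}) valid for any pair of biorthogonal Riesz bases.

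The main obstacle is the justification of the two exchanges: the BCH decomposition as an operator identity on the appropriate dense domain, and the Fubini-type swap of sum and integral at the end. Both steps are routine under Assumption 4 but are the only places where the Riesz hypothesis is genuinely used; without it one cannot even guarantee convergence of the biorthogonal expansion of $\tilde\varphi_{\alpha,\beta}(z)$. This is fully consistent with the failure observed in the Example above, where the Riesz property is violated whenever $\alpha\neq\overline\beta$ and the integral correspondingly fails to return $\1$.
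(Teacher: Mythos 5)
Your argument is correct in substance but takes a genuinely different route from the paper's. The paper's proof invokes the structural characterization of Riesz bases: it writes $\varphi_n(\alpha,\beta)=X_{\alpha,\beta}\Phi_n$ and $\Psi_n(\alpha,\beta)=Y_{\alpha,\beta}\Phi_n$ for an orthonormal basis $\{\Phi_n\}$ and bounded operators with bounded inverses, observes that biorthogonality forces $Y_{\alpha,\beta}=(X_{\alpha,\beta}^{-1})^\dagger$, and then pulls $X_{\alpha,\beta}$ and $X_{\alpha,\beta}^{-1}$ through the weak integral so that the claim reduces to the canonical resolution $\frac{1}{\pi}\int_{\Bbb{C}}dz\,|\Phi(z)><\Phi(z)|=\1$ for the standard coherent states built on $\{\Phi_n\}$. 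You instead integrate the double series term by term and land on the discrete resolution $\sum_n|\varphi_n(\alpha,\beta)><\Psi_n(\alpha,\beta)|=\1$. Both proofs require the same preliminary step, namely your BCH expansion $\tilde\varphi_{\alpha,\beta}(z)=e^{-|z|^2/2}\sum_n z^n\varphi_n(\alpha,\beta)/\sqrt{n!}$, which is correct; note that you rightly take $\varphi_0(\alpha)$ and $\Psi_0(\beta)$ as the generating vectors, which is what the stated eigenvalue equations and the computation preceding the Proposition require, even though the displayed definition in the text literally says $\varphi_0$. The paper's route buys a one-line conclusion at the price of quoting the Riesz-basis similarity theorem; yours is more self-contained and makes explicit where each hypothesis enters (the Bessel bounds for convergence of the series, biorthonormality for the final identity).

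One technical point needs repair. The interchange of $\sum_{n,m}$ and $\int_{\Bbb{C}}$ is not justified by dominated convergence as you state it: the natural bound on the absolute double sum, obtained from Cauchy--Schwarz and the Bessel inequalities, is $e^{-|z|^2}\cdot e^{|z|^2/2}\cdot e^{|z|^2/2}\cdot\sqrt{A_\varphi A_\Psi}\,\|f\|\,\|g\|$, i.e.\ a constant in $z$, which is not integrable over $\Bbb{C}$. The standard fix is to integrate first over the disk $|z|\le R$, where the series converges uniformly and the swap is legitimate; this yields $\sum_n P_n(R)\left<f,\varphi_n(\alpha,\beta)\right>\left<\Psi_n(\alpha,\beta),g\right>$ with $P_n(R)=\frac{1}{n!}\int_0^{R^2}t^n e^{-t}\,dt\in[0,1]$, and one then lets $R\to\infty$ using that $\sum_n|\left<f,\varphi_n(\alpha,\beta)\right>|\,|\left<\Psi_n(\alpha,\beta),g\right>|\le\sqrt{A_\varphi A_\Psi}\,\|f\|\,\|g\|$ dominates the tail. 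With this adjustment your proof is complete.
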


\begin{proof}

Since $\F_\varphi^{(\alpha,\beta)}$ and $\F_\Psi^{(\alpha,\beta)}$ are Riesz bases there exists an (unique) orthonormal basis of $\Hil$, $\{\Phi_n\}$, and two bounded operators with  bounded inverses, $X_{\alpha,\beta}$ and $Y_{\alpha,\beta}$, such that $\varphi_n(\alpha,\beta)=X_{\alpha,\beta}\,\Phi_n$ and $\Psi_n(\alpha,\beta)=Y_{\alpha,\beta}\,\Phi_n$, for all $n\geq0$. Due to the biorthogonality of the two sets we must have $Y_{\alpha,\beta}=(X_{\alpha,\beta}^{-1})^\dagger$. Hence our claim follows easily.

\end{proof}

This Proposition is in agreement with our previous conclusions: we have first seen that $\F_\varphi^{(\alpha,\beta)}$ and $\F_\Psi^{(\alpha,\beta)}$ are not Riesz bases. But they are biorthogonal. Hence the resolution of the identity for the associated coherent states cannot be satisfied!

It is not hard to extend this proposition to the general settings of \cite{bag1}. This will be done in a future paper.

\vspace{1cm}

In this paper we have discussed a general strategy, extending ordinary SUSY quantum mechanics, to construct examples of pseudo-bosons. We have seen how these results are related to PHQM and to coherent states. In particular, an interesting output of our procedure is that it produces many different bases of $\Lc^2(\Bbb R)$ and, under  extra conditions, many examples of Riesz bases.

\section*{Acknowledgements}

The author acknowledges financial support by the Murst. The author also thanks the referee for his useful suggestions, which have improved significantly the paper.

\end{document}